\numberwithin{equation}{section}
\newtheorem{theorem}{Theorem}
\newtheorem{lemma}{Lemma}
\newcommand{\gn}{G_{\rm N}}
\newcommand{\Sm}{S_{\rm m}}
\newcommand{\Lm}{\mathcal{L}_{\rm m}}
\newcommand{\Lop}{\hat{L}}
\newcommand{\Ltop}{\hat{\tilde{L}}}
\title{\bf Singularity theorems in the effective field theory for quantum gravity at second order in curvature}
\author{Folkert~Kuipers\thanks{E-mail: F.Kuipers@sussex.ac.uk}~ and~ Xavier~Calmet\thanks{E-mail: X.Calmet@sussex.ac.uk}
	\\
	{\em Department of Physics and Astronomy, University of Sussex,}\\ 
	{\em Brighton, BN1 9QH, United Kingdom}
}
\begin{document}
	
\maketitle
\vspace{5cm}

\begin{abstract}
In this paper we discuss singularity theorems in quantum gravity using effective field theory methods. To second order in curvature, this effective field theory contains two new degrees of freedom which have important implications for the derivation of these energy theorems: a massive spin-2 field and a massive spin-0 field. Using an explicit mapping of this theory from the Jordan frame to the Einstein frame, we show that both the cosmological and the black hole singularity theorems may not hold due to the presence of a massive spin-2 field in the particle spectrum of quantum gravity. Furthermore, we show that the massive scalar field can lead to a violation of the assumptions used to derive Hawking's singularity theorem. On the other hand, it does not affect Penrose's singularity theorem.
\end{abstract}

\thispagestyle{empty}
\pagebreak
\pagenumbering{arabic}

\section{Introduction}
The significance of singularity theorems in general relativity first presented in the seminal papers of Penrose and Hawking \cite{Penrose:1964wq,Hawking:1966vg} cannot be overemphasized. Since these foundational works several adaptions and refinements of the singularity theorems have been developed (see e.g. \cite{Borde:1987qr,Roman:1988vv,Wald:1991xn,Fewster:2010gm,Brown:2018hym}). In general, all these theorems boil down to the same principle: the assumption of some energy condition together with some global statement about space-time leads to the prediction of geodesic incompleteness somewhere in the space-time. Geodesic incompleteness is then often taken as equivalent to the existence of a singularity, although the latter is a slightly stronger statement (see e.g. \cite{Witten:2019qhl}).\\

A crucial ingredient for the proof of most of singularity theorems is the Raychaudhuri equation\footnote{However, see \cite{Fewster:2019bjg} for a recent example that doesn't make use of this equation}, that can be derived from the Einstein field equations. It is therefore crucial to assume classical general relativity for  singularity theorems to hold, and for any deviations of general relativity one would have to reassess the derivation of singularity theorems, as was done, for example, for $f(R)$ gravity \cite{Alani:2016ail}.\\

It is clear that general relativity needs to be embedded in a gravitational theory which can be quantized, i.e. a theory of quantum gravity, if one accounts for the quantum properties of matter and space-time. Such a theory of quantum gravity is not known yet, but many different approaches to such a theory have been formulated. Furthermore any theory of quantum gravity should in the infrared limit reduce to general relativity. Despite the lack of a unique theory of quantum gravity, quantum corrections to general relativity solutions can be calculated using effective field theory methods \cite{Weinberg:1980gg,Barvinsky:1984jd,Barvinsky:1985an,Barvinsky:1987uw,Barvinsky:1990up,Buchbinder:1992rb,Donoghue:1994dn}. Calculations done in this framework apply to any ultra-violet complete theory of quantum gravity and are valid at energies scales up to the Planck mass, and thus in the entire spectrum that can potentially be probed experimentally.\\

It is expected that in a theory for quantum gravity singularities will be resolved, since singularities lead to pathologies both in general relativity and quantum field theory. However, singularities cannot be avoided as long as  singularity theorems hold. It is therefore an important question whether the assumptions of the singularity theorems break down in a theory for quantum gravity. A discussion of possible quantum loop holes for the singularity theorems can for example be found in \cite{Ford:2003qt}.\\
	
In this work we discuss the validity of the singularity theorems in the framework of the effective field theory approach to quantum gravity. A drawback of this approach is that the theory is not valid at energy scales larger than the Planck mass which corresponds to regions of large curvature, where singularities are expected to form. We shall assume that the physics responsible for the avoidance of singularities becomes relevant at energies below the Planck mass and can thus be described within our mathematical framework, an example would be, e.g., a bounce solution in FLRW cosmology which would avoid a Big Crunch solution, see for example \cite{Donoghue:2014yha}. We note that this approach goes beyond general relativity and it is applicable to any theory of quantum gravity.\\

This paper is organized as follows: in the next section we derive the action for effective quantum gravity in the Einstein frame. In section 4 we discuss singularity theorems in effective quantum gravity using this action. In section 4 we then conclude. Furthermore, in appendix A we discuss the classical Hawking and Penrose singularity theorems, and in appendix B we discuss a refined statement of Hawking's theorem using weakened energy conditions.\\

In this paper we work in the $(+---)$ metric and use the conventions $R^{\rho}_{~\sigma\mu\nu} = \partial_{\mu} \Gamma^{\rho}_{\nu\sigma} - ...$, $R_{\mu\nu}=R^{\lambda}_{~\mu\lambda\nu}$, $T_{\mu\nu} = \frac{2}{\sqrt{|g|}} \frac{\delta \Sm}{\delta g^{\mu\nu}}$. Furthermore $\kappa^2=8\pi\gn$.\\ 
\section{Effective quantum gravity in the Einstein Frame}
In this section we map the effective field theory for quantum gravity to the Einstein frame. Such mappings for $R$ and $R_{\mu\nu}$ theories have been discussed in \cite{Magnano:1987zz,Ferraris:1988zz,Magnano:1990qu,Jakubiec:1988ef,Calmet:2012eq,Calmet:2017voc}. Furthermore, the case of effective gravity without non-local interactions has been discussed in \cite{Hindawi:1995an}. Here we adapt these approaches to include the non-local terms in the effective quantum gravity formalism. The effective action for quantum gravity can be obtained by integrating out the graviton fluctuations and potentially other massless degrees of freedom. It is known that the graviton self interactions \cite{Kallosh:1978wt} make the form factors ill-defined, as the Wilson coefficients become gauge dependent. However, there is a well defined procedure to resolve these ambiguities \cite{Barvinsky:1984jd,Barvinsky:1985an}. The resulting effective action is given by
\begin{align}
	S = \int d^4 x \sqrt{|g|}
	&\left\{-\frac{R}{2\kappa^2} 
	+ c_1(\mu) R^2 
	+ c_2(\mu) R_{\mu\nu} R^{\mu\nu} 
	+ c_3(\mu) R_{\mu\nu\rho\sigma} R^{\mu\nu\rho\sigma} 
	+ \alpha R \ln\left(\frac{\Box}{\mu^2}\right) R \right. \nonumber\\
	&\left. + \beta R_{\mu\nu} \ln\left(\frac{\Box}{\mu^2}\right) R^{\mu\nu} 
	+ \gamma R_{\mu\nu\rho\sigma} \ln\left(\frac{\Box}{\mu^2}\right) R^{\mu\nu\rho\sigma} 
	+ \mathcal{O}(\kappa^2) \right\} + \Sm.
\end{align}
Using the Gauss-Bonnet theorem this can be rewritten to\footnote{Due to the presence of a $\ln(\Box)$ term in $\Lop_2$, the Gauss-Bonnet theorem does not hold in full generality. However, it is valid up to this order in $\kappa$ \cite{Calmet:2018elv,Barvinsky:1990up,Barvinsky:1994hw,Barvinsky:1993en}} 
\begin{equation}
	S = - \frac{1}{2\kappa^2} \int d^4 x \sqrt{|g|} \left\{ R - \kappa^2 R \Lop_1 R 
	- \kappa^2 C_{\mu\nu\rho\sigma} \Lop_2 C^{\mu\nu\rho\sigma} + \mathcal{O}(\kappa^4) \right\} + \Sm,
\end{equation}
where $C$ is the Weyl tensor and
\begin{align}
	\Lop_1 &= \frac{2}{3} \left[ 3 c_1(\mu) + c_2(\mu) + c_3(\mu) 
		+ (3\alpha+\beta+\gamma) \ln \left( \frac{\Box}{\mu^2}\right) \right], \label{eq:Lop1}\\
	\Lop_2 &= \left[ c_2(\mu) + 4c_3(\mu) 
		+ (\beta + 4\gamma) \ln \left( \frac{\Box}{\mu^2}\right) \right].
\end{align}
We apply a Legendre transform to the function
\begin{equation}
	f_1(R)=R - \kappa^2 R \Lop_1 R,
\end{equation}
and find
\begin{equation}
	S = - \frac{1}{2\kappa^2} \int d^4 x \sqrt{|g|} \left\{ \phi\,R - V_1(\phi)
	- \kappa^2 C_{\mu\nu\rho\sigma} \Lop_2 C^{\mu\nu\rho\sigma} + \mathcal{O}(\kappa^4) \right\} + \Sm,
\end{equation}
where
\begin{align}
	R &= \frac{\partial V_1(\phi)}{\partial \phi},\\
	\phi &= \frac{\partial f_1(R)}{\partial R}.
\end{align}
We integrate the first equation and fix the integration constant such that
\begin{equation}\label{eq:V1}
	V_1(\phi) = -\frac{1}{4\kappa^2} (\phi -1) \Lop_1^{-1} (\phi -1),
\end{equation}
where we use the notation $\Lop_1^{-1}$ to denote the Green's function of the operator $\Lop_1$. If we apply a conformal transformation to the metric
\begin{equation}
	g_{\mu\nu} \rightarrow \bar{g}_{\mu\nu} = |\phi| g_{\mu\nu} = \exp\left({\sqrt{\frac{2\kappa^2}{3}} \chi}\right) g_{\mu\nu},
\end{equation}
where we have introduced a new field $\chi$, we can rewrite the action as
\begin{align}
	S = -\frac{1}{2\kappa^2} \int d^4 x \sqrt{|\bar{g}|} 
	&\left\{ \bar{R} + \sqrt{6} \kappa \bar{\Box} \chi 
	- \kappa^2 \bar{\nabla}^{\mu} \chi \bar{\nabla}_{\mu} \chi  
	- \frac{V_1[\phi(\chi)]}{\phi(\chi)^2}
	- \kappa^2 \bar{C}_{\mu\nu\rho\sigma} \Lop_2 \bar{C}^{\mu\nu\rho\sigma} \right. \nonumber\\
	&\left. - \frac{2\kappa^2 \Lm\left(X,g^{\mu\nu}\right)}{\phi(\chi)^2} 
	+ \mathcal{O}(\kappa^4) \right\},
\end{align}
where we have used that the Weyl tensor does not transform under a conformal rescaling of the metric. Furthermore, $X$ represents all matter fields.\\

We can drop the total divergence term, since it does not affect the equations of motion, and apply the Gauss-Bonnet theorem to rewrite the Weyl tensor. We then find
\begin{align}
	S = -\frac{1}{2\kappa^2} \int d^4 x \sqrt{|\bar{g}|} 
	&\left\{ \bar{R}
	- \kappa^2 \bar{\nabla}^{\mu} \chi \bar{\nabla}_{\mu} \chi  
	- \frac{V_1[\phi(\chi)]}{\phi(\chi)^2}
	- 2 \kappa^2 \bar{R}_{\mu\nu} \Lop_2 \bar{R}^{\mu\nu}
	+ \frac{2 \kappa^2}{3} \bar{R} \Lop_2 \bar{R}
	\right. \nonumber\\
	&\left. - \frac{2\kappa^2 \Lm\left(X,g^{\mu\nu}\right)}{\phi(\chi)^2} 
	+ \mathcal{O}(\kappa^4) \right\}.
\end{align}
We consider the function
\begin{equation}
	f_2(\bar{R}_{\mu\nu}) = \bar{R} - 2 \kappa^2 \bar{R}_{\mu\nu} \Lop_2 \bar{R}^{\mu\nu}
	+ \frac{2 \kappa^2}{3} \bar{R} \Lop_2 \bar{R},
\end{equation}
and apply a Legendre transform to this part of the action, which results in
\begin{align}
	S = -\frac{1}{2\kappa^2} \int d^4 x \sqrt{|\bar{g}|} 
	&\left\{ \psi^{\mu\nu} \bar{R}_{\mu\nu} - V_2(\psi^{\mu\nu})
	- \kappa^2 \bar{\nabla}^{\mu} \chi \bar{\nabla}_{\mu} \chi  
	- \frac{V_1[\phi(\chi)]}{\phi(\chi)^2}
	\right. \nonumber\\
	&\left. - \frac{2\kappa^2 \Lm\left(X,g^{\mu\nu}\right)}{\phi(\chi)^2} 
	+ \mathcal{O}(\kappa^4) \right\},
\end{align}
where\footnote{Note that the spin-2 field is symmetric in its indices, since $R_{\mu\nu}$ is symmetric.}
\begin{align}
	\bar{R}_{\mu\nu} &= \frac{\partial V_2(\psi^{\mu\nu})}{\partial \psi^{\mu\nu}},\\
	\psi^{\mu\nu} &= \frac{\partial f_2(\bar{R}_{\mu\nu})}{\partial \bar{R}_{\mu\nu}}.
\end{align}
We integrate the first equation and fix the integration constant such that\footnote{The potential $V_2$ is real, which can easily be shown by evaluating the expression.}
\begin{equation}
	V_2(\psi^{\mu\nu}) = - \frac{1}{8\kappa^2} 
	\left(\psi_{\mu\nu} - \frac{1 \mp i \sqrt{3}}{4} \psi\, \bar{g}_{\mu\nu} \mp i \sqrt{3}\, \bar{g}_{\mu\nu}\right) 
	\Lop_2^{-1} 
	\left(\psi^{\mu\nu} - \frac{1 \mp i \sqrt{3}}{4} \psi\, \bar{g}^{\mu\nu} \mp i \sqrt{3}\, \bar{g}^{\mu\nu}\right).
\end{equation}
We perform another metric transformation such that
\begin{equation}
	\bar{g}_{\mu\nu} \rightarrow \tilde{g}_{\mu\nu} = \sqrt{|\psi|}\, \bar{g}_{\mu\rho} \left( \psi^{-1} \right)^{\rho}_{~\nu},
\end{equation}
where we define the determinants
\begin{align}
	|g| &= \det\left(g_{\mu\nu}\right),\\
	|\psi| &= \det \left( \pi^{\mu}_{~\nu}\right),
\end{align}
and we write
\begin{align}
	\tilde{\psi}^{\mu}_{~\nu} &= \psi^{\mu}_{~\nu},\\
	\tilde{\psi}^{\mu\nu} &= \tilde{\psi}^{\mu}_{~\rho} \tilde{g}^{\rho\nu},\\
	\tilde{\psi}_{\mu\nu} &= \tilde{g}_{\mu\rho} \tilde{\psi}^{\rho}_{~\nu}.
\end{align}
We obtain the transformed action
\begin{align}
	S = -\frac{1}{2\kappa^2} \int d^4 x \sqrt{|\tilde{g}|} 
	& \left\{ \tilde{R} 
	- \kappa^2 \left(\psi^{-1}\right)^{\mu}_{~\nu} \tilde{\nabla}^{\nu} \chi \tilde{\nabla}_{\mu} \chi
	\right. \nonumber\\
	&
	+ \tilde{g}^{\mu\nu} \left(\tilde{\nabla}_{\rho} Q^{\rho}_{~\mu\nu} - \tilde{\nabla}_{\nu} Q^{\rho}_{~\rho\mu} 
	+ Q^{\rho}_{~\rho\sigma} Q^{\sigma}_{~\mu\nu} - Q^{\rho}_{~\sigma\mu} Q^{\sigma}_{~\rho\nu} \right)
	 \nonumber\\
	&\left.
	- \frac{V_1[\phi(\chi)]}{\phi(\chi)^2 \sqrt{|\psi|}}
	- \frac{V_2(\psi^{\mu\nu})}{\sqrt{|\psi|}}	
	- \frac{2\kappa^2 \Lm\left(X,g^{\mu\nu}\right)}{\phi(\chi)^2 \sqrt{|\psi|}} 
	+ \mathcal{O}(\kappa^4) \right\},
\end{align}
where
\begin{equation}
	Q^{\rho}_{~\mu\nu}(\psi^{\alpha}_{~\beta}) = \frac{1}{2} \bar{g}^{\rho\sigma} (\psi^{\alpha}_{~\beta})
	\left( \tilde{\nabla}_{\mu} \bar{g}_{\nu\sigma} (\psi^{\alpha}_{~\beta})
	+ \tilde{\nabla}_{\nu} \bar{g}_{\sigma\mu} (\psi^{\alpha}_{~\beta})
	- \tilde{\nabla}_{\sigma} \bar{g}_{\mu\nu} (\psi^{\alpha}_{~\beta})\right).
\end{equation}
We again drop the total derivative terms, and we define a new spin-2 field $\xi$ such that
\begin{equation}
	\psi^{\mu}_{~\nu} = \left( 1 + \frac{\kappa}{2} \xi \right) \delta^{\mu}_{\nu} - \kappa \xi^{\mu}_{~\nu}
\end{equation}
with $\xi=\xi^{\mu}_{~\mu}$. We find
\begin{equation}\label{eq:V2}
	V_2(\psi^{\mu\nu}) = - \frac{1}{8} \left(
	\xi^{\mu}_{~\nu} \Lop_2^{-1} \xi^{\nu}_{~\mu} 
	- \xi \Lop_2^{-1} \xi \right).
\end{equation}
After this transformation the action becomes
\begin{align}
	S = \int d^4 x \sqrt{|\tilde{g}|} 
	& \left\{ - \frac{\tilde{R}}{2\kappa^2} 
	+ \frac{1}{2} \tilde{\nabla}^{\nu} \chi \tilde{\nabla}_{\mu} \chi
	+ \frac{V_1[\phi(\chi)]}{2 \kappa^2 \phi(\chi)^2 \sqrt{|\psi(\xi)|}}
	\right. \nonumber\\
	&
	-\left[\frac{1}{2} \xi\tilde{\Box}\xi
	- \frac{1}{2} \xi^{\mu\nu} \tilde{\Box} \xi_{\mu\nu}
	- \xi^{\mu\nu} \tilde{\nabla}_{\mu} \tilde{\nabla}_{\nu} \xi
	+ \xi^{\mu\nu} \tilde{\nabla}_{\rho} \tilde{\nabla}_{\nu} \xi^{\rho}_{~\mu} \right] \nonumber\\
	&\left.
	+ \frac{V_2(\psi^{\mu\nu}(\xi))}{2 \kappa^2 \sqrt{|\psi(\xi)|}}	
	+ \Lm\left(X,g^{\mu\nu}\right) 
	\right\} + \mathcal{O}(\kappa),
\end{align}
where we used that $\phi(\chi) = 1 + \mathcal{O}(\kappa)$, $\psi^{\mu}_{~\nu} = \delta^\mu_{\nu} + \mathcal{O}(\kappa)$. In addition, we expand the terms containing a potential using $\Lop=\Ltop+\mathcal{O}(\kappa)$ and find 
\begin{align}
	S =  \int d^4 x \sqrt{|\tilde{g}|} 
	& \left\{ - \frac{\tilde{R}}{2\kappa^2}
	+ \frac{1}{2} \tilde{\nabla}^{\mu} \chi \tilde{\nabla}_{\mu} \chi
	- \chi (12\kappa^2\Ltop_1)^{-1} \chi
	\right. \nonumber\\
	&
	-\left[\frac{1}{2} \xi\tilde{\Box}\xi
	- \frac{1}{2} \xi^{\mu\nu} \tilde{\Box} \xi_{\mu\nu}
	- \xi^{\mu\nu} \tilde{\nabla}_{\mu} \tilde{\nabla}_{\nu} \xi
	+ \xi^{\mu\nu} \tilde{\nabla}_{\rho} \tilde{\nabla}_{\nu} \xi^{\rho}_{~\mu} \right] \nonumber\\
	&\left.
	- \left[\xi^{\mu\nu}(16\kappa^2\Ltop_2)^{-1}\xi_{\mu\nu} - \xi(16\kappa^2\Ltop_2)^{-1}\xi \right]
	+ \Lm\left(X,g^{\mu\nu}\right)
	\right\} + \mathcal{O}(\kappa),
\end{align}
where indices on $\xi$ are raised an lowered with $\tilde{g}$. We then find the equations of motion for the scalar field:
\begin{equation}
	\tilde{\Box} \chi = - (6\kappa^2\Ltop_1)^{-1} \chi + \mathcal{O}(\kappa).
\end{equation}
We can solve the equation of motion for the Green's function $(6\kappa^2\Ltop_1)^{-1}$ by Fourier transformation:
\begin{equation}
	\int d^4k \left\{ -k^2 + \frac{1}{4\kappa^2 \left[ 3 c_1(\mu) + c_2(\mu) + c_3(\mu) 
		+ (3\alpha+\beta+\gamma) \ln \left( \frac{-k^2}{\mu^2}\right) \right]} \right\} \chi(k) = \mathcal{O}(\kappa).
\end{equation}
This results in the mass of the scalar field given by
\begin{equation}\label{eq:massScal}
	m_0^2 = \frac{1}{4\kappa^2(3\alpha+\beta+\gamma)
		W\left(- \frac{1}{4\mu^2\kappa^2(3\alpha+\beta+\gamma)}
		\exp\left[\frac{3c_1(\mu) + c_2(\mu) + c_3(\mu)}{3\alpha + \beta + \gamma} \right] 	\right)},
\end{equation}
which corresponds to earlier results (see e.g. \cite{Calmet:2018qwg}). We can do a similar analysis for the tensor field, which yields (cf. \cite{Calmet:2018qwg})
\begin{equation}
	m_2^2 = \frac{1}{2\kappa^2(\beta+4\gamma)
		W\left( -\frac{1}{2\mu^2\kappa^2(\beta+4\gamma)}
		\exp\left[\frac{c_2(\mu) + 4c_3(\mu)}{\beta + 4 \gamma} \right] \right)}.
\end{equation}
This resulting action is
\begin{align}\label{eq:ActionFinal}
	S =  \int d^4 x \sqrt{|\tilde{g}|} 
	& \left\{ -\frac{\tilde{R}}{2\kappa^2}
	+ \frac{1}{2} \tilde{\nabla}^{\mu} \chi \tilde{\nabla}_{\mu} \chi
	- \frac{1}{2} m_0^2 \chi^2
	\right. \nonumber\\
	&
	-\left[\frac{1}{2} \xi\tilde{\Box}\xi
	- \frac{1}{2} \xi^{\mu\nu} \tilde{\Box} \xi_{\mu\nu}
	- \xi^{\mu\nu} \tilde{\nabla}_{\mu} \tilde{\nabla}_{\nu} \xi
	+ \xi^{\mu\nu} \tilde{\nabla}_{\rho} \tilde{\nabla}_{\nu} \xi^{\rho}_{~\mu} \right] \nonumber\\
	&\left.
	- \frac{1}{2} m_2^2 \left[\xi^{\mu\nu}\xi_{\mu\nu} - \xi\xi \right]
	+ \Lm\left(X,g^{\mu\nu}\right)
	\right\} + \mathcal{O}(\kappa).
\end{align}
We can then find the equation of motion for the metric
\begin{align}\label{eq:FieldEq}
	\left(\tilde{R}_{\mu\nu} - \frac{1}{2} \tilde{R}\, \tilde{g}_{\mu\nu} \right)
	=& \kappa^2 \left\{ 
	\tilde{T}_{\mu\nu} 
	+ \tilde{\nabla}_{\mu} \chi \tilde{\nabla}_{\nu} \chi 
	- \frac{1}{2} \tilde{g}_{\mu\nu} \tilde{\nabla}^{\rho} \chi \tilde{\nabla}_{\rho} \chi
	+ \frac{1}{2} m_0^2 \tilde{g}_{\mu\nu} \chi^2
	\right. \nonumber\\ & \qquad
	- 2\xi_{\mu\nu} \tilde{\Box} \xi
	- \xi \tilde{\nabla}_{\mu} \tilde{\nabla}_{\nu} \xi
	+ 2 \xi_{\mu\rho} \tilde{\Box} \xi^{\rho}_{~\nu}
	+ \xi^{\rho\sigma} \tilde{\nabla}_{\mu} \tilde{\nabla}_{\nu} \xi_{\rho\sigma}
	\nonumber\\ & \qquad
	+ 2 \xi^{\rho}_{~\mu} \tilde{\nabla}_{\nu} \tilde{\nabla}_{\rho} \xi
	+ 2 \xi^{\rho}_{~\mu} \tilde{\nabla}_{\rho} \tilde{\nabla}_{\nu} \xi
	+ 2 \xi^{\rho\sigma} \tilde{\nabla}_{\rho} \tilde{\nabla}_{\sigma} \xi_{\mu\nu}
	\nonumber\\ & \qquad
	- 2 \xi^{\rho}_{~\mu} \tilde{\nabla}_{\sigma} \tilde{\nabla}_{\rho} \xi^{\sigma}_{~\nu}
	- 2 \xi^{\rho}_{~\mu} \tilde{\nabla}_{\sigma} \tilde{\nabla}_{\nu} \xi^{\sigma}_{~\rho}
	- 2 \xi^{\rho\sigma} \tilde{\nabla}_{\mu} \tilde{\nabla}_{\sigma} \xi_{\nu\rho}
	\nonumber\\ & \qquad
	+ \tilde{g}_{\mu\nu} \left[
	\frac{1}{2} \xi \tilde{\Box} \xi
	- \frac{1}{2}\xi^{\rho\sigma} \tilde{\Box} \xi_{\rho\sigma}
	- \xi^{\rho\sigma} \tilde{\nabla}_{\rho} \tilde{\nabla}_{\sigma} \xi
	+ \xi^{\rho\sigma} \tilde{\nabla}_{\lambda} \tilde{\nabla}_{\sigma} \xi^{\lambda}_{~\rho}
	\right] \nonumber\\ & \qquad \left.
	- 2 m_2^2 \left[ \xi^{\rho}_{~\mu} \xi_{\nu\rho}
	- \xi_{\mu\nu} \xi \right] 
	+ \frac{1}{2} m_2^2 \tilde{g}_{\mu\nu} \left[
	\xi^{\rho\sigma} \xi_{\rho\sigma}
	- \xi \xi \right] \right\}
	\nonumber\\ &\qquad
	+ \mathcal{O}(\kappa^3).
\end{align}
This can be rewritten in the form
\begin{align}\label{eq:EQMgRed}
	\tilde{R}_{\mu\nu}
	=& \kappa^2 \left\{ 
	\tilde{T}_{\mu\nu}
	- \frac{1}{2} \tilde{T} \tilde{g}_{\mu\nu}
	+ \tilde{\nabla}_{\mu} \chi \tilde{\nabla}_{\nu} \chi 
	- \frac{1}{2} m_0^2 \tilde{g}_{\mu\nu} \chi^2
	\right. \nonumber\\ & \qquad
	- 2\xi_{\mu\nu} \tilde{\Box} \xi
	- \xi \tilde{\nabla}_{\mu} \tilde{\nabla}_{\nu} \xi
	+ 2 \xi_{\mu\rho} \tilde{\Box} \xi^{\rho}_{~\nu}
	+ \xi^{\rho\sigma} \tilde{\nabla}_{\mu} \tilde{\nabla}_{\nu} \xi_{\rho\sigma}
	\nonumber\\ & \qquad
	+ 2 \xi^{\rho}_{~\mu} \tilde{\nabla}_{\nu} \tilde{\nabla}_{\rho} \xi
	+ 2 \xi^{\rho}_{~\mu} \tilde{\nabla}_{\rho} \tilde{\nabla}_{\nu} \xi
	+ 2 \xi^{\rho\sigma} \tilde{\nabla}_{\rho} \tilde{\nabla}_{\sigma} \xi_{\mu\nu}
	\nonumber\\ & \qquad
	- 2 \xi^{\rho}_{~\mu} \tilde{\nabla}_{\sigma} \tilde{\nabla}_{\rho} \xi^{\sigma}_{~\nu}
	- 2 \xi^{\rho}_{~\mu} \tilde{\nabla}_{\sigma} \tilde{\nabla}_{\nu} \xi^{\sigma}_{~\rho}
	- 2 \xi^{\rho\sigma} \tilde{\nabla}_{\mu} \tilde{\nabla}_{\sigma} \xi_{\nu\rho}
	\nonumber\\ & \qquad
	+ \tilde{g}_{\mu\nu} \left[
	 \xi \tilde{\Box} \xi
	- \xi^{\rho\sigma} \tilde{\Box} \xi_{\rho\sigma}
	- 2 \xi^{\rho\sigma} \tilde{\nabla}_{\rho} \tilde{\nabla}_{\sigma} \xi
	+ 2 \xi^{\rho\sigma} \tilde{\nabla}_{\lambda} \tilde{\nabla}_{\sigma} \xi^{\lambda}_{~\rho}
	\right] \nonumber\\ & \qquad \left.
	- 2 m_2^2 \left[ \xi^{\rho}_{~\mu} \xi_{\nu\rho}
	- \xi_{\mu\nu} \xi \right]
	+ \frac{1}{2} m_2^2 \tilde{g}_{\mu\nu} \left[
	\xi^{\rho\sigma} \xi_{\rho\sigma}
	- \xi \xi \right] \right\}
	+ \mathcal{O}(\kappa^3).
\end{align}

\section{Singularity theorems in effective quantum gravity}

\subsection{Massive scalar field}
It is known that a massive scalar field always satisfies the null energy condition, but can easily violate the strong condition (cf. \cite{Bekenstein:1975ww, Hawking:1973uf}). The energy momentum tensor is given by
\begin{equation}
	T_{\mu\nu} = \nabla_{\mu} \chi \nabla_{\nu}\chi - \frac{1}{2} g_{\mu\nu} \left( \nabla^{\rho}\chi \nabla_{\rho} \chi + m^2 \chi_0^2\right).
\end{equation}
Hence,
\begin{equation}
	T_{\mu\nu}v^\mu v^\nu = \left(v^{\mu}\nabla_{\mu} \chi\right)^2 \geq 0,
\end{equation}
where $v$ is an arbitrary null vector. We conclude that the null energy condition is satisfied. However,
\begin{equation}
	T_{\mu\nu} - \frac{1}{2} g_{\mu\nu}T = \nabla_{\mu} \chi \nabla_{\nu}\chi - \frac{1}{2} g_{\mu\nu} m_0^2 \chi^2
\end{equation}
which leads to
\begin{equation}\label{eq:SECscal}
	\left(T_{\mu\nu} - \frac{1}{2} g_{\mu\nu}T\right)t^\mu t^\nu = \left(t^{\mu}\nabla_{\mu} \chi\right)^2 - \frac{1}{2} m_0^2 \chi^2,
\end{equation}
where $t$ is an arbitrary normalized time-like vector. We see that this expression could be both larger and smaller to $0$. Consequently the strong energy condition does not necessarily hold. We conclude that the scalar field arising in effective quantum gravity could resolve cosmological singularities, but not black hole singularities.

\subsection{Bounds on the mass of the massive scalar field}
Using the results from appendix \ref{sec:SingThmWeak} we can derive a bound on the mass of the scalar field for which the cosmological singularity theorem still holds. First consider the action \eqref{eq:ActionFinal} containing only the massive scalar. Eq. \eqref{eq:EQMgRed} then reduces to
\begin{equation}
	\tilde{R}_{\mu\nu}
	= \kappa^2 \left\{ 
	\tilde{T}_{\mu\nu}
	- \frac{1}{2} \tilde{T} \tilde{g}_{\mu\nu}
	+ \tilde{\nabla}_{\mu} \chi \tilde{\nabla}_{\nu} \chi 
	- \frac{1}{2} m_0^2 \tilde{g}_{\mu\nu} \chi^2
	\right\}
	+ \mathcal{O}(\kappa^3).
\end{equation}
Let us consider a globally hyperbolic $4$-dimensional space-time with compact Cauchy hypersurface $S$, and assume $|\chi|<\chi_{\max}$ is bounded towards te past of $S$. Then
\begin{align}
	\int_{0}^{T} e^{-\frac{2C\tau}{n-1}} R_{\mu\nu}(\tau) \hat{\gamma}^{\mu} \hat{\gamma}^{\nu}(\tau) d\tau &\geq -\frac{1}{2} \kappa^2 m_0^2 \chi_{\max}^2 \int_{0}^{T} e^{-\frac{2C\tau}{n-1}} d\tau \nonumber\\
	&\geq -\frac{3\kappa^2}{4C} m_0^2 \chi_{\max}^2,
\end{align}
where $\hat{\gamma}$ is a normalized tangent vector to a past directed time-like geodesic and where we have used the strong energy condition in the first line. We find
\begin{equation}
	-\frac{C}{2} + \int_{-T}^{0} e^{\frac{2C\tau}{n-1}} R_{\mu\nu}(\tau) \hat{\gamma}^{\mu}(\tau) \hat{\gamma}^{\nu}(\tau) d\tau
	\geq - \frac{C}{2} - \frac{3\kappa^2}{4C} m_0^2 \chi_{\max}^2
\end{equation}
for any $C>0$. The right hand side is maximized for $C = \sqrt{\frac{3}{2}} \kappa m_0 \chi_{\max}$. By Theorem~\ref{Thm:SingHawk2} we then find that $\mathcal{M}$ is past geodesically incomplete, if
\begin{equation}
	\theta > \sqrt{\frac{3}{2}}\, \kappa\, m_0\,\chi_{\max}
\end{equation}
everywhere on $S$. Hence for
\begin{equation}
m_0  < \sqrt{\frac{2}{3}} \frac{\theta_{\min}}{\kappa\, \chi_{\max}}
\end{equation}
the singularity theorem still holds.\\

We can use the expression for the mass of the scalar \eqref{eq:massScal} to find a condition for the Wilson coefficients. Let us first ignore the nonlocal terms $\alpha,\beta,\gamma$. We then find
\begin{equation}
m_0^2 = \frac{1}{4\kappa^2\left[3c_1(\mu)+c_2(\mu)+c_3(\mu)\right]}.
\end{equation}
We thus find that the singularity theorem holds for
\begin{equation}
3c_1(\mu) + c_2(\mu) + c_3(\mu)
> \frac{3\chi_{\max}^2}{8\theta_{\min}^2},
\end{equation}
where we have assumed $3c_1(\mu) + c_2(\mu) + c_3(\mu) > 0$, as the opposite would imply that the scalar field is tachyonic. If we include the non-local contributions, we find instead
\begin{equation}
	3c_1(\mu) + c_2(\mu) + c_3(\mu) > {\rm Re}\left( \frac{3\chi_{\max}^2}{8\theta_{\min}^2} 
	+ (3\alpha+\beta+\gamma) \ln\left[- \frac{3 \mu^2 \kappa^2 \chi_{\max}^2 }{2 \theta_{\min}^2} \right]\right),
\end{equation}
where only the logarithm has a complex part that accounts for the decay width of the field \cite{Calmet:2014gya,Calmet:2015pea,Calmet:2017omb}.\\

We can make an estimate of the expansion parameter for our universe, by assuming the FLRW-metric, and by assuming that we live on a compact Cauchy hypersurface with a Hubble parameter that is constant along the surface. We find
\begin{equation}
	\theta_{\min} = \frac{1}{3} H \approx 10^{-18}\, \rm{s}^{-1},
\end{equation}
where the Hubble parameter is fixed by experiment\footnote{We take $H_0\approx 70 {\rm km\, s^{-1}\, Mpc^{-1}}$}. In addition, we require an estimate for $\chi_{\max}$, which will rely on theoretical prejudice. However, for the effective action to be consistent one would expect that both the scalar and tensor fields arising in the Einstein frame do not exceed the Planck scale. We thus make the rough estimate
\begin{equation}
	\chi_{\max} = \sqrt{\frac{c^5}{8\pi \gn \hbar }} = 10^{42}\, \rm{s}^{-1}.
\end{equation}
Hence,
\begin{equation}
	\frac{3\chi_{\max}^2}{8 \theta_{\min}^2} = 10^{121}.
\end{equation}
Furthermore, the non-local part leads to a correction given by
\begin{equation}
	(3\alpha+\beta+\gamma) \ln\left[- \frac{3 \mu^2 \kappa^2 \chi_{\max}^2 }{2 \theta_{\min}^2} \right] \approx 10^2,
\end{equation}
where we have used the known values for $\alpha,\beta,\gamma$ assuming only standard model fields \cite{Kallosh:1978wt}. Furthermore, we have set the cutoff scale $\mu\approx \kappa^{-1}$. These non-local corrections are thus negligible compared to the local contributions.
We conclude that the singularity theorem holds, if
\begin{equation}
	3c_1(\mu) + c_2(\mu) + c_3(\mu) \gtrsim 10^{121}
\end{equation}
or
\begin{equation}
	m_0 \lesssim  10^{-34}\, {\rm eV}/c^2.
\end{equation}
The singularity theorem can thus be violated for a large range of values.\\

The scalar and spin-2 particles give rise to corrections to the Newtonian potential according to the formula
\begin{equation}
	\Phi(r) = - \frac{\gn m}{r} \left( 1 + \frac{1}{3} e^{- \rm{Re}(m_0)r} - \frac{4}{3} e^{- \rm{Re}(m_2)r} \right)
\end{equation}
The E\"{o}t-Wash experiment \cite{Hoyle:2004cw} sets bounds on deviations from this potential. Assuming that the corrections do not cancel each other, both corrections should satisfy these experimental bounds, i.e. 
\begin{equation}
	m_0,m_2\geq 10^{-3} \,\rm{eV}/c^2
\end{equation}
Hence, the singularity theorem can be violated for all feasible values of the Wilson coefficients.\\

It might seem counterintuitive that tiny Wilson coefficients already lead to a breakdown of the assumptions of the singularity theorems, while large Wilson coefficients do not. In particular, since the smaller the Wilson coefficients the closer the action is to the Einstein Hilbert action. However, small Wilson coefficients lead to very massive scalar fields, which can violate the strong energy condition, as can be seen in eq. \ref{eq:SECscal}. Furthermore, the Einstein equation is a second order differential equation, while the introduction of the terms quadratic in the Ricci scalar and tensor make it a fourth order equation. As is well known solutions of differential equations are generically not stable against perturbations that change the class of the differential equation (cf. \cite{Barrow:1983rx} for a discussion of this fact in the context of general relativity).

\subsection{Spin-2 massive ghost}

Let us now turn to the massive spin-2 field. Since this field is a ghost one would expect it to violate the null energy condition. Indeed we can write the energy momentum tensor explicitly
\begin{align}
	T_{\mu\nu} 
	&= - 2\xi_{\mu\nu} \tilde{\Box} \xi
	- \xi \tilde{\nabla}_{\mu} \tilde{\nabla}_{\nu} \xi
	+ 2 \xi_{\mu\rho} \tilde{\Box} \xi^{\rho}_{~\nu}
	+ \xi^{\rho\sigma} \tilde{\nabla}_{\mu} \tilde{\nabla}_{\nu} \xi_{\rho\sigma}
	\nonumber\\ 
	& \qquad
	+ 2 \xi^{\rho}_{~\mu} \tilde{\nabla}_{\nu} \tilde{\nabla}_{\rho} \xi
	+ 2 \xi^{\rho}_{~\mu} \tilde{\nabla}_{\rho} \tilde{\nabla}_{\nu} \xi
	+ 2 \xi^{\rho\sigma} \tilde{\nabla}_{\rho} \tilde{\nabla}_{\sigma} \xi_{\mu\nu}
	\nonumber\\ 
	& \qquad
	- 2 \xi^{\rho}_{~\mu} \tilde{\nabla}_{\sigma} \tilde{\nabla}_{\rho} \xi^{\sigma}_{~\nu}
	- 2 \xi^{\rho}_{~\mu} \tilde{\nabla}_{\sigma} \tilde{\nabla}_{\nu} \xi^{\sigma}_{~\rho}
	- 2 \xi^{\rho\sigma} \tilde{\nabla}_{\mu} \tilde{\nabla}_{\sigma} \xi_{\nu\rho}
	\nonumber\\ 
	& \qquad
	+ \tilde{g}_{\mu\nu} \left[
	\frac{1}{2} \xi \tilde{\Box} \xi
	- \frac{1}{2}\xi^{\rho\sigma} \tilde{\Box} \xi_{\rho\sigma}
	- \xi^{\rho\sigma} \tilde{\nabla}_{\rho} \tilde{\nabla}_{\sigma} \xi
	+ \xi^{\rho\sigma} \tilde{\nabla}_{\lambda} \tilde{\nabla}_{\sigma} \xi^{\lambda}_{~\rho}
	\right] \nonumber\\ 
	& \qquad
	- 2 m_2^2 \left[ \xi^{\rho}_{~\mu} \xi_{\nu\rho}
	- \xi_{\mu\nu} \xi \right] 
	+ \frac{1}{2} m_2^2 \tilde{g}_{\mu\nu} \left[
	\xi^{\rho\sigma} \xi_{\rho\sigma}
	- \xi \xi \right].
\end{align}
In order to show that the field can violate the null energy condition, we construct a counterexample. We consider the special case in which the tensor field is aligned with the metric:
\begin{equation}
	\xi_{\mu\nu} = \frac{1}{4} g_{\mu\nu} \xi.
\end{equation}
This results in an energy momentum tensor given by
\begin{equation}
	T_{\mu\nu} = - \frac{1}{16} \left( g_{\mu\nu} \xi \Box \xi + \xi\nabla_{\mu}\nabla_{\nu} \xi + \xi\nabla_{\nu}\nabla_{\mu} \xi\right).
\end{equation}
Hence,
\begin{align}
	T_{\mu\nu} v^\mu v^\nu 
	&= - \frac{1}{8} \xi\nabla_{\mu}\nabla_{\nu} \xi v^\mu v^\nu \nonumber\\
	&= - \frac{1}{8} (k_\mu v^\mu \xi)^2 \nonumber\\
	&\leq 0,
\end{align}
where $v$ is an arbitrary null-like vector and where we assumed the field $\xi$ to be an eigenvector of $\nabla_\mu\nabla_\nu$ with eigenvector $k_\mu k_\nu$, as is the case if the field exhibits sinusoidal behavior with wave vector $k$.\\

Since the spin-2 field can violate the null energy condition, it can violate the strong energy condition as well. We conclude that the massive spin-2 field can resolve both kinds of singularities, since it does not satisfy any of the required energy conditions.\\

The fact that the ghost field can resolve singularities is less of a surprise, if one takes into account that the ghost field leads to a repulsive contribution to Newton's potential \cite{Calmet:2017rxl,Calmet:2019odl}, and could thus result in a effective repulsive force at small distances.

\section{Conclusion and Outlook}
It is well known that the classical singularity theorems \cite{Penrose:1964wq,Hawking:1966vg} only hold if general relativity is assumed. Quantum gravity, however, leads to deviation from general relativity, as can easily be shown using effective field theory methods. Furthermore, one of the main objectives of quantum gravity theories is to resolve singularities. In this work, we have discussed the validity of the singularity theorems in the context of an effective field theory for quantum gravity at second order in curvature.\\

We have considered singularity theorems by making an explicit mapping to the Einstein frame. It is well known that the local terms in this theory give rise to an additional scalar and tensor field at second order in curvature. We have shown that the inclusion of the nonlocal terms at this order only give rise to a shift in the mass of these fields.\\

We have then shown that the massive spin-2 ghost field can easily violate the null energy condition and thus the strong energy condition as well. Although this is expected from a ghost field, it shows that the ghost field can be useful for resolving singularities in quantum gravity.  We stress that the ghost field in effective theories for quantum gravity is not problematic, since it must be treated as a classical field in this framework \cite{Calmet:2019odl}. \\

Furthermore, we have shown that the scalar field cannot resolve black hole singularities but can for certain values of the Wilson coefficients lead to resolution of cosmological singularities. These bounds follow purely from the singularity theorems formulated for weakened energy conditions in \cite{Fewster:2010gm}. It should be noted that cosmological singularity avoidance in this framework has already been found in \cite{Donoghue:2014yha}. On the other hand, black hole solutions do not get corrected at this order \cite{Calmet:2018elv} in the effective field theory framework, which is an indication that the classical black hole singularity persists at this order in an effective theory. However, other examples of singularity resolution in various theories such as higher derivative gravity \cite{Giacchini:2018gxp,Giacchini:2018wlf}, string theory \cite{Tseytlin:1995uq} and polynomial gravity models \cite{Accioly:2016qeb} have been found. \\

It is important to notice that the breakdown of the assumptions of Hawking's and Penrose's singularity theorem does not imply the non-existence of singularities. However, it does imply that singularities can potentially be avoided, which is impossible, if the assumptions hold. In particular, in the black hole case, where the ghost field violates the conditions for the singularity theorem, it is known that there are no correction to the metric at second order in curvature. The standard general relativity singularity is still present at this order in the effective field theory. A potential resolution of singularity must come from higher order curvature terms in the action. Alternatively, one could also hope that other black hole solutions \cite{Stelle:1977ry,Lu:2015cqa,Lu:2015psa} arising at second order in curvature may not be affected by singularities and are thus the solutions which are relevant physically.\\

Furthermore, we should notice that these results only hold up to second order in curvature. Inclusion of higher orders might force us back into a regime where the singularity theorems hold or might draw us further away from this regime. The effects of these terms is not negligible, since singularities form in highly curved regions of space-time. However, it is interesting that singularities can potentially already be resolved at second order in curvature and can help guide the way to singularity resolution in ultra-violet complete theories of quantum gravity.

\section*{Acknowledgments}
This work is supported in part  by the Science and Technology Facilities Council (grant number ST/P000819/1). 
\appendix

\section{Classical singularity theorems}

\subsection{Hawking's cosmological singularity theorem}
In this appendix we state and proof Hawking's singularity theorem \cite{Hawking:1966vg}.

\begin{theorem}\label{Thm:SingHawk}
	Let $\mathcal{M}$ be a globally hyperbolic $n$-dimensional space-time with $n\geq2$ and a Cauchy surface $S$. Assume that $\exists\, C>0$ such that $\theta_{x}<-C$ $\forall\, x\in S$, where $\theta=\frac{1}{2} g^{\mu\nu} \partial_{\tau} g_{\nu\mu}$ is the expansion parameter.	Furthermore, assume that matter within this space-time satisfies the strong energy condition
	\begin{equation}
	\left( T_{\mu\nu} - \frac{1}{2} g_{\mu\nu} T \right) t^\mu t^\nu \geq 0
	\end{equation}
	for every normalized time-like vector $t^{\mu}$ everywhere in the future of the Cauchy surface $S$. Then the space-time $\mathcal{M}$ is geodesically incomplete towards the future of $S$.\\
	Moreover, if $\theta_{x}>C$ $\forall\, x\in S$ and the strong energy condition is satisfied everywhere in the past of $S$, then $\mathcal{M}$ is geodesically incomplete towards the past of $S$.
\end{theorem}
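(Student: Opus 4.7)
The plan is to exploit the Raychaudhuri equation for the congruence of future-directed timelike geodesics orthogonal to $S$. First I would use the Einstein equations $R_{\mu\nu} - \tfrac{1}{2} R g_{\mu\nu} = \kappa^2 T_{\mu\nu}$ to trade the strong energy condition for the equivalent geometric statement $R_{\mu\nu} t^\mu t^\nu \geq 0$ for every unit timelike $t^\mu$; this is where classical general relativity enters in an essential way, and explains why the body of the paper must reassess the theorem once higher-curvature and non-local corrections are included.

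Next I would consider the congruence of future-directed unit timelike geodesics emanating orthogonally from $S$. Because the congruence is hypersurface orthogonal, Frobenius' theorem gives vanishing twist $\omega_{\mu\nu} = 0$, and the Raychaudhuri equation becomes
\begin{equation*}
\frac{d\theta}{d\tau} = -\frac{\theta^2}{n-1} - \sigma_{\mu\nu}\sigma^{\mu\nu} - R_{\mu\nu} t^\mu t^\nu.
\end{equation*}
Using $\sigma_{\mu\nu}\sigma^{\mu\nu} \geq 0$ together with the previous step yields the differential inequality $d\theta/d\tau \leq -\theta^2/(n-1)$. Integrating with initial data $\theta(0) < -C$ gives $1/\theta(\tau) \geq 1/\theta(0) + \tau/(n-1)$, which forces $\theta \to -\infty$ at some proper time $\tau_\star \leq (n-1)/C$. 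This is the focusing statement.

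The final step is to convert focusing into genuine geodesic incompleteness. I would invoke the standard Lorentzian-geometric fact that once an orthogonal timelike geodesic from $S$ has passed a focal point, it no longer maximizes the Lorentzian distance $d(S,\cdot)$ among timelike curves from $S$ to any subsequent point. On the other hand, global hyperbolicity together with the Cauchy property of $S$ implies that for every $q \in J^+(S)$ there exists a maximizing causal curve from $S$ to $q$, and that this curve must be a timelike geodesic meeting $S$ orthogonally. Combining these two facts yields the uniform bound $d(S,q) \leq (n-1)/C$ on all future Lorentzian distances from $S$. If $\mathcal{M}$ were future geodesically complete, the normal congruence could be extended to arbitrarily large proper time from $S$, producing points whose distance from $S$ exceeds $(n-1)/C$, a contradiction. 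The past statement follows by time reversal, applying the same argument to the time-reversed congruence with initial expansion $\theta > C$.

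The hard part will not be the Raychaudhuri estimate, which is essentially mechanical once the SEC has been converted to a Ricci bound, but the clean deployment of the maximizing-geodesic existence result. Showing that global hyperbolicity plus the Cauchy-surface hypothesis actually delivers a maximizing timelike geodesic orthogonal to $S$ for every point in $J^+(S)$ requires standard but nontrivial tools from causal theory, namely the upper semicontinuity of Lorentzian arc length on a compact space of causal curves, and it is this ingredient, rather than focusing itself, that genuinely converts curvature hypotheses into an incompleteness conclusion.
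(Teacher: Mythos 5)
Your proposal is correct and follows essentially the same route as the paper's own proof: the strong energy condition is converted into $R_{\mu\nu}t^\mu t^\nu\geq 0$ via the Einstein equations, the Raychaudhuri equation with vanishing twist gives the differential inequality $d\theta/d\tau\leq-\theta^2/(n-1)$ forcing a focal point within proper time $(n-1)/C$, and global hyperbolicity plus the Cauchy property of $S$ supplies the maximizing orthogonal geodesics that turn focusing into incompleteness. Your write-up is in fact somewhat more explicit than the paper's about the final causal-theoretic step, but it is the same argument.
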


\begin{proof}
	Consider an $n$-dimensional globally hyperbolic space-time $\mathcal{M}$ with Cauchy surface $S$. Then we can find an open neighborhood $\hat{S}\supset S$ and a coordinate system on $\hat{S}$ such that the metric is given by
	\begin{equation}
	ds^2 = -dt^2 + g_{ij}(t,\vec{x}) dx^i dx^j.
	\end{equation}
	In order to proof Hawking's singularity theorem \cite{Hawking:1966vg}, we can write down the Raychaudhuri equation \cite{Raychaudhuri:1953yv}:
	\begin{equation}
	\frac{d\theta}{d\tau} = -\frac{\theta^2}{n-1}  - \sigma_{\mu\nu}\sigma^{\nu\mu} - R_{\mu\nu}t^\mu t^\nu,
	\end{equation}
	where the expansion $\theta$ and shear $\sigma_{\mu\nu}$ are given by
	\begin{align}
	\theta &= \frac{1}{2} g^{\mu\nu} \partial_{\tau} g_{\nu\mu} = \frac{\dot{V}}{V},\\
	\sigma^{\mu}_{\nu} &= \frac{1}{2} \left( g^{\mu\rho} \partial_\tau g_{\rho\nu} - \frac{1}{n-1} \delta^{\mu}_{\nu} g^{\rho\sigma} \partial_{\tau} g_{\sigma\rho} \right),
	\end{align}
	where we defined
	\begin{equation}
	V = \sqrt{\det(g)}
	\end{equation}
	and the time-derivative by $\dot{V} = \partial_{\tau} V$. Furthermore, $\theta$ and $\sigma_{\mu\nu}$ are taken along a time-like path $\gamma$ parametrized by $\tau$ with normalized tangent vectors $t^\mu$, and $\gamma(0)\in S$.\\
	
	If we use the Einstein field equation, we can rewrite the Raychaudhuri equation to
	\begin{equation}
	\frac{d\theta}{d\tau}  = - \frac{\theta^2}{n-1} - \sigma_{\mu\nu}\sigma^{\nu\mu} - \kappa^2 \left( T_{\mu\nu} - \frac{1}{2} g_{\mu\nu} T \right) t^\mu t^\nu.
	\end{equation}
	Assuming the strong energy condition
	\begin{equation}
	\left( T_{\mu\nu} - \frac{1}{2} g_{\mu\nu} T \right) t^\mu t^\nu \geq 0,
	\end{equation}
	we find
	\begin{equation}
	\frac{d\theta}{d\tau} \leq - \frac{\theta^2}{n-1}.
	\end{equation}
	Hence,
	\begin{equation}\label{eq:HawkSingIneq}
	\frac{d}{d\tau} \theta^{-1} \geq \frac{1}{n-1}.
	\end{equation}
	Assume $\exists\, C>0$ such that $\theta_{x}(0)<-C$ $\forall\, x\in S$, then we can integrate \eqref{eq:HawkSingIneq} and obtain
	\begin{equation}
	\frac{1}{\theta(\tau)} \geq \frac{\tau}{n-1} - \frac{1}{C}.
	\end{equation}
	Hence for $\tau\in\left(-\infty,\frac{n-1}{C}\right)$
	\begin{equation}
	\theta(\tau) \leq - \left(\frac{1}{C} - \frac{\tau}{n-1}\right)^{-1}.
	\end{equation}
	We can rewrite in terms of $V$ and integrate to find
	\begin{equation}
	0 \leq V(\tau) \leq V(0) \left( 1 - \frac{C \tau}{n-1} \right)^{n-1}.
	\end{equation}
	Therefore
	\begin{equation}
	\lim_{\tau \rightarrow \frac{n-1}{C}} V(\tau) = 0.
	\end{equation}
	We thus conclude that any geodesic emanating from the Cauchy surface will develop a focal point for $0<\tau\leq \frac{n-1}{C}$. Furthermore, since $S$ is a Cauchy surface and $\mathcal{M}$ is globally hyperbolic, any point $y\in\mathcal{M}$ is connected to a point $x\in S$ through a causal path of maximal proper time. We thus conclude that no geodesic $\gamma(\tau)$ can be extended to $\tau\geq\frac{n-1}{C}$. Therefore, the space-time is geodesically incomplete towards the future. This proves the future version of the theorem. The past version immediately follows by inverting the time direction in the proof.\\
\end{proof}

We conclude this subsection by mentioning an immediate result of the theorem: if there exists a Cauchy surface $S$ such that the Hubble parameter $H\geq H_0>0$ on the entire surface $S$, and the strong energy condition is expected to hold anywhere in the past of this surface, then the space-time is geodesically incomplete towards the past. More precisely no geodesic can be extended beyond $\tau = H_0^{-1}$ towards the past. To see this, we recall that the Hubble constant given by
\begin{equation}
H = \frac{\dot{a}}{a} = (n-1) \frac{\dot{V}}{V}
\end{equation}
for the FLRW-metric
\begin{equation}
ds^2 = - dt^2 + a(t)^2 d\vec{x}^2.
\end{equation}

\subsection{Penrose's black hole singularity theorem}
In this appendix we state and prove Penrose's singularity theorem \cite{Penrose:1964wq}. Here we closely follow the proof provided in \cite{Witten:2019qhl}.

\begin{theorem}\label{Thm:SingPen}
	Let $\mathcal{M}$ be a globally hyperbolic $n$-dimensional space-time with $n\geq3$ and a non-compact Cauchy surface $S$. Assume that $\mathcal{M}$ contains a compact trapped surface\footnote{A codimension 2 spacelike and achronal submanifold such that the null expansion parameter is negative everywhere on $U$ for each family of orthogonal future going null geodesics.} $U$. Furthermore, assume that matter within this space-time satisfies the null energy condition
	\begin{equation}
	T_{\mu\nu} v^\mu v^\nu \geq 0
	\end{equation}
	for every null-like vector $v^{\mu}$ everywhere in the future of the trapped surface $U$. Then the space-time $\mathcal{M}$ is null-geodesically incomplete towards the future of $U$.\\
\end{theorem}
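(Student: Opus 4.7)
The plan is to adapt the classical Penrose argument (along the lines of the treatment in \cite{Witten:2019qhl}): use the null Raychaudhuri equation and the null energy condition to force focal points on the orthogonal null congruences of $U$ within a bounded affine parameter; deduce that the achronal boundary $\partial J^{+}(U)$ would be compact if all such geodesics were complete; and derive a contradiction with the non-compactness of the Cauchy surface $S$ by pushing $\partial J^{+}(U)$ into $S$ along a timelike vector field.

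First I would write the Raychaudhuri equation for each of the two families of future-directed null geodesics emanating orthogonally from $U$, with affine parameter $\lambda$ and tangent vector $k^\mu$:
\begin{equation}
    \frac{d\theta}{d\lambda} = -\frac{\theta^{2}}{n-2} - \sigma_{\mu\nu}\sigma^{\mu\nu} - R_{\mu\nu}k^{\mu}k^{\nu}.
\end{equation}
Using the Einstein equation and the null energy condition $T_{\mu\nu}k^\mu k^\nu \geq 0$, one obtains the standard inequality
\begin{equation}
    \frac{d}{d\lambda}\theta^{-1} \geq \frac{1}{n-2}.
\end{equation}
Since $U$ is a compact trapped surface, continuity yields a uniform bound $\theta\big|_{U} \leq -\theta_{0} < 0$ for both orthogonal null congruences. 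Integrating the inequality then forces $\theta\to-\infty$ no later than affine parameter $\lambda_{\ast} = (n-2)/\theta_{0}$, i.e.\ a focal point to $U$ develops on every orthogonal null geodesic within this uniformly bounded affine interval.

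Next I would invoke the standard Lorentzian-geometry fact that the generators of the achronal boundary $\partial J^{+}(U)$ are precisely null geodesics that start orthogonal to $U$, and that such a generator must leave $\partial J^{+}(U)$ at or before its first focal point. Together with global hyperbolicity, this shows that $\partial J^{+}(U)$ is the continuous image of a closed subset of $U \times [0,\lambda_{\ast}]$, and is therefore compact, \emph{provided} all these null geodesics can be extended up to affine parameter $\lambda_{\ast}$. If some orthogonal null geodesic cannot be so extended, the space-time is already null-geodesically incomplete towards the future of $U$ and the theorem follows; so from now on I assume all extensions exist and aim for a contradiction.

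Finally I would exploit global hyperbolicity to choose a smooth timelike vector field whose integral curves provide a continuous injection $\Phi: \partial J^{+}(U) \hookrightarrow S$. Because $\partial J^{+}(U)$ is a topological $(n-1)$-manifold without boundary (an achronal $C^{0}$ hypersurface), $\Phi$ is an open map, so $\Phi(\partial J^{+}(U))$ is open in $S$; but it is also compact, hence closed. Taking $S$ connected without loss of generality, this image must be all of $S$, contradicting the hypothesis that $S$ is non-compact. The contradiction refutes the assumption that every orthogonal null geodesic was complete up to $\lambda_{\ast}$, establishing null geodesic incompleteness. The main obstacle I anticipate is the causality-theoretic step linking focal points to the compactness of $\partial J^{+}(U)$: it relies on background lemmas (generators of achronal boundaries, the fact that maximizing null geodesics cease to lie in the boundary past a focal point, the $C^{0}$ manifold structure of $\partial J^{+}(U)$) which I would have to quote carefully rather than re-derive.
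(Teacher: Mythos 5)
Your proposal is correct and follows essentially the same route as the paper's proof: null Raychaudhuri plus the null energy condition forces focal points within a uniformly bounded affine parameter, compactness of $\partial J^{+}(U)$ follows if all orthogonal null generators are extendible, and this contradicts the non-compactness of the Cauchy surface. The only difference is that you spell out the final topological step (the open-and-closed argument via flow along a timelike vector field), whereas the paper delegates that standard fact to a citation.
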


\begin{proof}
	Consider a globally hyperbolic $n$-dimensional space-time with non-compact Cauchy surface $S$, and a compact trapped surface $U$. Then we can find an open neighborhood $\hat{U} \supset U$ and a coordinate system on $\hat{U}$ such that the metric is given by (cf. \cite{Sachs:1962zzb,Witten:2019qhl})
	\begin{equation}
	ds^2 = -2 e^q dv du + g_{AB}(dx^A + c^A dv)(dx^B + c^B dv),
	\end{equation}
	where $x^A$ is an arbitrary but fixed local coordinate system on the $(n-2)$-dimensional surface $U$. Furthermore, $q$ and $c$ are respectively a scalar and vector function of the coordinates. In this metric we can evaluate the Ricci tensor and find
	\begin{equation}
	R_{uu} = -\frac{1}{2} \partial_u \left( g^{AB} \partial_u g_{AB}\right) - \frac{1}{4} \left( g^{AC} \partial_u g_{BC}\right) \left( g^{BD} \partial_u g_{DA}\right).
	\end{equation}
	We can define the area of a bundle of orthogonal null geodesics locally by
	\begin{equation}
	A = \sqrt{\det(g_{AB})},
	\end{equation}
	which allows us to define the null expansion as
	\begin{equation}
	\theta = \frac{\dot{A}}{A} = \frac{1}{2} g^{AB} \partial_u g_{BA},
	\end{equation}
	where the dot represents a derivative with respect to $u$. Furthermore, we can define the null shear by
	\begin{equation}
	\sigma^A_B = \frac{1}{2} \left( g^{AC} \partial_u g_{CB} - \frac{1}{n-2} \delta^A_B g^{CD} \partial_u g_{DC}\right).
	\end{equation}
	We then find the null Raychaudhuri equation given by
	\begin{equation}
	\frac{d\theta}{d u} = - \frac{\theta^2}{n-2} - \sigma_{AB}\sigma^{BA} - R_{uu}.
	\end{equation}
	Furthermore, we can use the Einstein equation and the fact $g_{uu}=0$ to write
	\begin{equation}
	R_{uu} = \kappa^2 T_{uu}.
	\end{equation}
	Imposing the null energy condition results in
	\begin{equation}
	\frac{d}{d u} \theta^{-1} \geq \frac{1}{n-2}.
	\end{equation}
	Using that $U$ is a trapped surface $\exists\, C>0$ such that $\theta_x<-C\; \forall x\in U$, one can  integrate this equation in a similar way as was done in the proof of theorem \ref{Thm:SingHawk}. One obtains
	\begin{equation}
	\lim_{u \rightarrow \frac{n-2}{C}} A(u) = 0.
	\end{equation}
	Therefore, all future going null like geodesics develop a focal point for an affine distance $0<u\leq\frac{n-2}{C}$.\\
	
	Let us now assume that all null-geodesics can be extended beyond this focal point, and let us pick such a geodesic $l$ arbitrarily. Then at least a small segment of this geodesic is prompt, and lies in the lightcone $\partial J^+(U)$. Furthermore, the part of $l$ that lies in $\partial J^+(U)$ is connected, and the part beyond its first focal point cannot be in $\partial J^+(U)$, since it is not prompt. Therefore  $l \cap \partial J^+(U)$ is a finite non-empty interval, which has to be closed, since $\partial J^+(U)$ is closed in $\mathcal{M}$.\\
	
	If we take an arbitrary point $p\in \partial J^+(U)$, then this point can be reached by a null geodesic originating from $U$. This point is thus determined by the point $q\in U$, where the geodesic emanates, the value of the affine parameter $u$ measured along the geodesic and the direction (i.e. ingoing or outgoing) of the geodesic. Since $U$ is compact and since the affine parameters measured along the geodesics range over a compact interval, we find that $\partial J^+(U)$ is compact.\\
	
	However, by construction $\partial J^+(U)$ is an achronal codimension 1 submanifold of $\mathcal{M}$. Furthermore, by assumption $\mathcal{M}$ is a globally hyperbolic manifold with noncompact Cauchy hypersurface $S$, and thus does not allow for an achronal codimension 1 submanifold (see e.g. \cite{Witten:2019qhl}).	Hence, we arrive at a contradiction and conclude that at least one of the future going null geodesics orthogonal to $U$ cannot be extended beyond an affine distance $(n-2)/C$, which proves the theorem.\\
\end{proof}

\section{Singularity theorems for weakened energy conditions}\label{sec:SingThmWeak}
In this section, we state a theorem and its proof from \cite{Fewster:2010gm}. The theorem is similar to Hakwing's cosmological singularity theorem, but uses relaxed conditions on the energy momentum tensor.

\begin{theorem}\label{Thm:SingHawk2}
	Let $\mathcal{M}$ be a globally hyperbolic $n$-dimensional space-time ($n\geq2$) with a compact Cauchy surface $S$. Assume that $\exists\, C\geq0$ such that along every future directed geodesic $\gamma$ issuing orthogonally from $S$ we have
	\begin{equation}\label{eq:ThmAss}
	\liminf_{T\rightarrow \infty} \int_{0}^{T} e^{-\frac{2C\tau}{n-1}} R_{\mu\nu}(\tau) \hat{\gamma}^{\mu}(\tau) \hat{\gamma}^{\nu}(\tau) d\tau > \theta(x_0) + \frac{C}{2},
	\end{equation}
	where $x_{0}= \gamma(0)\in S$, $\theta(x_0)$ is the expansion at $x_0$, and $\hat{\gamma}(\tau)$ is a normalized time-like tangent vector of $\gamma(\tau)$. Then $\mathcal{M}$ is geodesically incomplete towards the future of $S$.\\
	Moreover, if
	\begin{equation}
	\liminf_{T\rightarrow \infty} \int_{0}^{T} e^{-\frac{2C\tau}{n-1}} R_{\mu\nu}(\tau) \hat{\gamma}^{\mu}(\tau) \hat{\gamma}^{\nu}(\tau) d\tau > - \theta(x_0) + \frac{C}{2}
	\end{equation}
	with $\gamma$ a past directed geodesic, then $\mathcal{M}$ is geodesically incomplete towards the past of $S$.
\end{theorem}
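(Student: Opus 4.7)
The plan is to adapt the proof of Theorem~\ref{Thm:SingHawk} by replacing the pointwise strong energy condition with an integrated comparison argument that matches the exponentially weighted average in~\eqref{eq:ThmAss}. I would start by fixing an arbitrary future-directed timelike geodesic $\gamma$ issuing orthogonally from $S$, with normalized tangent $\hat{\gamma}$. As in the proof of Theorem~\ref{Thm:SingHawk}, setting up synchronous coordinates in a neighborhood of $S$ gives the Raychaudhuri equation, and dropping the non-negative shear term produces the Riccati inequality
\begin{equation*}
	\theta' + \frac{\theta^{2}}{n-1} + R_{tt} \leq 0, \qquad R_{tt} := R_{\mu\nu}(\tau)\,\hat{\gamma}^{\mu}(\tau)\hat{\gamma}^{\nu}(\tau).
\end{equation*}

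The central step is to convert this pointwise inequality into an integral one with the correct weight. I would multiply by the integrating factor $e^{-a\tau}$ with $a := 2C/(n-1)$, rewrite $e^{-a\tau}\theta'=(e^{-a\tau}\theta)'+ae^{-a\tau}\theta$, and complete the square via $\theta^{2}+2C\theta=(\theta+C)^{2}-C^{2}$. Integrating from $0$ to $T$ and evaluating $\int_{0}^{T}e^{-a\tau}d\tau = \tfrac{n-1}{2C}(1-e^{-aT})$, the resulting identity rearranges into the master inequality
\begin{equation*}
	\Phi(T) + \frac{1}{n-1}\int_{0}^{T} e^{-a\tau}(\theta+C)^{2}\, d\tau + e^{-aT}\!\left(\theta(T)+\tfrac{C}{2}\right) \leq \theta(x_{0}) + \tfrac{C}{2},
\end{equation*}
where $\Phi(T):=\int_{0}^{T} e^{-a\tau}R_{tt}\,d\tau$ is precisely the quantity bounded below in the hypothesis.

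I would then argue by contradiction: suppose $\gamma$ is complete on $[0,\infty)$ without focal points, so that $\theta$ stays finite for every finite $T$. Taking $\liminf$ as $T\to\infty$ in the master inequality, the quadratic integral is monotone non-negative and the boundary term $e^{-aT}(\theta(T)+C/2)$ decays to $0$ if $\theta$ remains bounded. This forces $\liminf_{T}\Phi(T)\leq \theta(x_{0})+C/2$, directly contradicting~\eqref{eq:ThmAss}. Hence a focal point must form at some finite $\tau^{\ast}$; by global hyperbolicity together with compactness of the Cauchy surface, maximal geodesics cannot be extended past a focal point, yielding future geodesic incompleteness. The past-directed statement follows by reversing the time orientation, exactly as in Theorem~\ref{Thm:SingHawk}.

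The main obstacle is the careful handling of the boundary term $e^{-aT}(\theta(T)+C/2)$: one must rule out the pathological possibility that $\theta(T)\to-\infty$ fast enough (faster than $-\varepsilon\, e^{aT}$) to compensate for the excess of $\Phi(T)$ over $\theta(x_{0})+C/2$. The key observation to defeat this scenario is that the Riccati inequality itself is self-limiting: whenever $\theta$ becomes sufficiently negative on a given interval, the $-\theta^{2}/(n-1)$ term alone drives $\theta$ to $-\infty$ in finite proper time, producing the focal point one sought. A minor secondary subtlety is the degenerate case $C=0$, where the weight is trivial; there the simpler unweighted version of the same calculation (with $\theta(T)+\Phi(T)+\tfrac{1}{n-1}\int_{0}^{T}\theta^{2}d\tau \leq \theta(x_{0})$) delivers the same contradiction.
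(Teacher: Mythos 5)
Your overall architecture coincides with the paper's: the paper also reduces the Raychaudhuri equation to a weighted Riccati problem, via the substitution $x(\tau)=-(\theta+C)e^{-2C\tau/(n-1)}$, which is exactly your integrating factor plus completion of the square, and then invokes the ODE lemma of Fewster--Galloway (Lemma~1 in Appendix~B, cited rather than proved) to conclude that $x$, hence $\theta$, blows up at finite proper time. Your ``master inequality''
\begin{equation*}
	\Phi(T) + \frac{1}{n-1}\int_{0}^{T} e^{-a\tau}(\theta+C)^{2}\, d\tau + e^{-aT}\Bigl(\theta(T)+\tfrac{C}{2}\Bigr) \leq \theta(x_{0}) + \tfrac{C}{2}
\end{equation*}
is correct (I checked the constants; the $-\tfrac{C}{2}e^{-aT}$ left over from $\int_0^T e^{-a\tau}d\tau$ is harmless) and is precisely the integrated form of that lemma's hypothesis. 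So you are attempting to prove the cited lemma inline rather than quote it, which is legitimate, but it is exactly there that your argument has a gap.

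The gap is in your resolution of the boundary-term problem. You claim that if $\theta(T)$ threatens to decay like $-\varepsilon e^{aT}$, then ``the $-\theta^{2}/(n-1)$ term alone drives $\theta$ to $-\infty$ in finite proper time.'' That pointwise argument requires $R_{\mu\nu}\hat\gamma^\mu\hat\gamma^\nu\geq 0$ (or at least a pointwise lower bound): the Raychaudhuri equation is $\theta'=-\theta^2/(n-1)-\sigma^2-R_{\mu\nu}\hat\gamma^\mu\hat\gamma^\nu$, and the entire point of Theorem~\ref{Thm:SingHawk2} is that $R_{\mu\nu}\hat\gamma^\mu\hat\gamma^\nu$ may be arbitrarily negative pointwise, so it can cancel the $-\theta^2/(n-1)$ term and allow $\theta$ to become very negative without blowing up in finite time. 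The correct way to close the argument is to bootstrap the master inequality itself rather than return to the pointwise ODE: under hypothesis~\eqref{eq:ThmAss} there exist $\varepsilon>0$ and $T_0$ with $A(T)+B(T)\leq-\varepsilon$ for $T\geq T_0$, where $A(T)=\frac{1}{n-1}\int_0^T e^{-a\tau}(\theta+C)^2d\tau$ and $B(T)=e^{-aT}(\theta(T)+C/2)$; hence $\theta(T)+C\leq -\,e^{aT}\bigl(\varepsilon+A(T)\bigr)+\tfrac{C}{2}$, which feeds back into $A'(T)=\frac{1}{n-1}e^{-aT}(\theta+C)^2$ to give a differential inequality of the form $J'\gtrsim e^{aT}J^{2}$ for $J=\varepsilon+A$, forcing $A$ (and with it $|\theta|$) to diverge at finite $\tau$. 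This is the content of the cited lemma; without it (or an equivalent bootstrap) your contradiction only covers the case where $\theta$ stays bounded, and the theorem would reduce to something much weaker. The geometric endgame (compactness of $S$ giving a uniform focal time, global hyperbolicity giving maximizing geodesics, time reversal for the past version) matches the paper and is fine.
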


For the proof we will use the following lemma which is proved in \cite{Fewster:2010gm}.

\begin{lemma}
	Consider the initial value problem
	\begin{equation}\label{eq:lemma}
	\begin{cases}
	\dot{x}(t) = \frac{x(t)^2}{q(t)} + p(t),\\
	x(0) = x_0,
	\end{cases}
	\end{equation}
	where $q(t)$ and $p(t)$ are continuous on $[0,\infty)$ and $q(t)>0 \; \forall t\in[0,\infty)$. If
	\begin{align}
	\int_{0}^{\infty} q(t)^{-1} dt &= \infty,\label{eq:CondLemmaQ}\\
	\liminf_{T\rightarrow \infty} \int_{0}^{T} p(t) dt &> -x_0,\label{eq:CondLemmaP}
	\end{align}
	eq. \eqref{eq:lemma} has no solution on $[0,\infty)$. Moreover it implies that $\lim_{t \rightarrow t_c} x(t)\rightarrow \infty$ for $t_c\in(0,\infty)$.\\
\end{lemma}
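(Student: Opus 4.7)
The plan is to argue by contradiction: assume a solution $x$ exists on the whole half-line $[0,\infty)$ and show that the Riccati-type nonlinearity $x^2/q$ in fact drives $x$ to $+\infty$ in finite time, contradicting global existence and simultaneously exhibiting the blow-up time $t_c$.

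First I would separate the nonlinear dynamics from the inhomogeneous forcing by setting $P(t)=\int_0^t p(s)\,ds$ and $y(t)=x(t)-x_0-P(t)$. Then $\dot y(t)=\dot x(t)-p(t)=x(t)^2/q(t)\ge 0$ with $y(0)=0$, so $y$ is nondecreasing, $y\ge 0$, and
\[
x(t)\;\ge\;x_0+P(t)\qquad\text{for all } t\in[0,\infty).
\]
The liminf hypothesis \eqref{eq:CondLemmaP} now produces $\varepsilon>0$ and $T_0\ge 0$ with $x_0+P(t)\ge\varepsilon$ for all $t\ge T_0$. Combining this with the previous bound, $x(t)\ge y(t)+\varepsilon>0$ on $[T_0,\infty)$; squaring (legal since both sides are positive) and feeding back into the ODE yields the autonomous-in-$y$ differential inequality
\[
\dot y(t)\;=\;\frac{x(t)^2}{q(t)}\;\ge\;\frac{(y(t)+\varepsilon)^2}{q(t)},\qquad t\ge T_0.
\]

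The final step is a direct Riccati integration. Dividing by $(y+\varepsilon)^2$ and recognising the left side as $-\frac{d}{dt}[(y+\varepsilon)^{-1}]$, I would integrate from $T_0$ to $t$ to obtain
\[
\frac{1}{y(T_0)+\varepsilon}-\frac{1}{y(t)+\varepsilon}\;\ge\;\int_{T_0}^{t}\frac{ds}{q(s)}.
\]
Hypothesis \eqref{eq:CondLemmaQ}, together with continuity of $q^{-1}$ on $[0,T_0]$, gives $\int_{T_0}^{\infty}q^{-1}\,ds=\infty$, so the right-hand side is continuous, increasing, and unbounded. Hence there exists a finite $t_c>T_0$ at which it reaches $1/(y(T_0)+\varepsilon)$, and forcing $1/(y(t)+\varepsilon)\to 0^+$ as $t\to t_c^-$ gives $y(t)\to +\infty$. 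Since $P$ is continuous and hence bounded on the compact interval $[0,t_c]$, also $x(t)=y(t)+x_0+P(t)\to+\infty$ as $t\to t_c^-$, which is the desired contradiction.

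The one mildly delicate point I expect is that the positive lower bound $\varepsilon$ is only guaranteed for $t\ge T_0$, whereas for $t\in[0,T_0]$ the quantity $x_0+P(t)$ can take arbitrary values; a naive attempt to integrate the Riccati inequality from $t=0$ would fail. The nonnegativity $y\ge 0$ obtained from $\dot y\ge 0$ is exactly what makes this non-issue, because it lets one safely restart the argument at $T_0$ with a nonnegative initial datum $y(T_0)$. Apart from this, everything is a standard comparison calculation once the substitution $y=x-x_0-P$ has been made.
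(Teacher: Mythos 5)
Your proof is correct. Note that the paper does not actually prove this lemma itself --- it imports it verbatim from Fewster and Galloway \cite{Fewster:2010gm} --- and your argument (the substitution $y = x - x_0 - P$ with $P(t)=\int_0^t p$, the monotonicity $\dot y = x^2/q \ge 0$, the eventual lower bound $x_0 + P(t) \ge \varepsilon$ extracted from the liminf hypothesis, and the Riccati comparison of $1/(y+\varepsilon)$ against $\int_{T_0}^{t} q^{-1}$) is essentially the standard proof given in that reference, so there is nothing methodologically different to compare. The one point worth tightening is the ``moreover'' clause: your comparison inequality shows only that the maximal existence time $t_c$ is \emph{at most} the time $t^*$ at which $\int_{T_0}^{t} q^{-1}\,ds$ reaches $1/(y(T_0)+\varepsilon)$; since the inequality $\dot y \ge (y+\varepsilon)^2/q$ may be strict, blow-up can occur strictly before $t^*$, so you should not identify $t_c$ with $t^*$. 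The clean way to finish is to note that $y$ is nondecreasing on the maximal interval $[0,t_c)$, hence has a limit in $[0,\infty]$; if that limit were finite then $x = y + x_0 + P$ would be bounded on $[0,t_c)$ and the solution could be continued past $t_c$ (continuous right-hand side plus the standard extension theorem), contradicting maximality. Therefore $y\to\infty$ and $x\to+\infty$ as $t\to t_c^-$, which also covers the case where blow-up happens before $T_0$.
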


\begin{proof}[Proof of Theorem~\ref{Thm:SingHawk2}]
	We follow the same argument as in the proof of Theorem~\ref{Thm:SingHawk} and find the Raychaudhuri equation
	\begin{equation}
	\frac{d\theta}{d\tau} = -\frac{\theta^2}{n-1}  - \sigma_{\mu\nu}\sigma^{\mu\nu} - R_{\mu\nu}t^\mu t^\nu,
	\end{equation}
	which can be rewritten to
	\begin{equation}
	\frac{dx(\tau)}{d\tau} = \frac{x(\tau)^2}{q(\tau)} + p(\tau)
	\end{equation}
	with
	\begin{align}
	x(\tau) &= -(\theta + C) e^{-\frac{2C \tau}{n-1}},\\
	p(\tau) &= \left( \frac{C^2}{n-1} + \sigma_{\mu\nu}\sigma^{\mu\nu} + R_{\mu\nu} t^{\mu} t^{\nu} \right) e^{-\frac{2C \tau}{n-1}},\\
	q(\tau) &= (n-1) e^{-\frac{2C \tau}{n-1}}.
	\end{align}
	Then $q(\tau)$ satisfies condition~\eqref{eq:CondLemmaQ}, while $p(\tau)$ satisfies condition~\eqref{eq:CondLemmaP}, if
	\begin{equation}
	\liminf_{T\rightarrow \infty} \int_{0}^{T} \left( \frac{C^2}{n-1} + \sigma_{\mu\nu}\sigma^{\mu\nu} + R_{\mu\nu} t^{\mu} t^{\nu} \right) e^{-\frac{2C \tau}{n-1}} d\tau > \theta(0) + C,
	\end{equation}
	which is satisfied, if
	\begin{equation}
	\liminf_{T\rightarrow \infty} \int_{0}^{T} e^{-\frac{2C \tau}{n-1}} R_{\mu\nu} t^{\mu} t^{\nu} d\tau > \theta(0) + \frac{C}{2}.
	\end{equation}
	By assumption~\eqref{eq:ThmAss}, this holds for all geodesics emanating from the Cauchy surface $S$. Thus $\lim_{\tau \rightarrow \tau_\gamma}x(\tau)\rightarrow\infty$ for some $\tau_\gamma\in(0,\infty)$, which immediately implies that $\lim_{\tau \rightarrow \tau_\gamma}\theta(\tau)\rightarrow -\infty$. Hence
	\begin{equation}
	\forall \gamma:[0,\infty)\rightarrow\mathcal{M}\quad {\rm with}\quad \gamma(0)\in S \quad \exists \tau_\gamma\in(0,\infty)\quad {\rm s.t.}\quad \lim_{\tau \rightarrow \tau_\gamma}V(\tau)\rightarrow 0.
	\end{equation} 
	By compactness of $S$, $\sup\{\tau_\gamma|\gamma:[0,\infty)\rightarrow\mathcal{M}, \gamma(0)\in S\}<\infty$. Furthermore, since $\mathcal{M}$ is globally hyperbolic every point $y\in J^+(S)$ can be connected through a geodesic $\gamma$ with maximal proper time.\\
	The past version can be obtained with a similar proof by inverting the direction of time.\\
\end{proof}

Let us finally note that one can derive a similar theorem for the black hole case \cite{Fewster:2010gm}.

\bibliographystyle{ieeetr}
\bibliography{mybib}

\end{document}